\newtheorem{theorem}{Theorem}
\newtheorem{proposition}[theorem]{Proposition}
\theoremstyle{definition}
\newtheorem{definition}[theorem]{Definition}
\def\N{\mathbb{N}}
\begin{document}

\title{Quantum logics close to Boolean algebras}

\author{Mirko Navara}

\author{Pavel Pt\'{a}k}

\begin{abstract}
We consider orthomodular posets endowed with a symmetric difference. We call them ODPs. 
Expressed in the quantum logic language, we consider quantum logics with an XOR-type connective.
We study three classes of ``almost Boolean'' ODPs, two of them defined by requiring rather specific behaviour of infima and the third by a Boolean-like behaviour of Frink ideals.
We establish a (rather surprising) inclusion between the three classes, shadding thus light on their intrinsic properties.
(More details can be found in the Introduction that follows.)
Let us only note that the orthomodular posets pursued here, though close to Boolean algebras (i.e., close to %the 
standard 
quantum logics), %they 
still have a potential for an arbitrarily high degree of non-compatibility and hence they may enrich the studies of mathematical foundations of quantum mechanics.
\end{abstract}

\maketitle
\footnotetext[1]{Mirko Navara (corresponding author), 
Department of Cybernetics,
Faculty of Electrical Engineering,
Czech Technical University in Prague,
Czech Republic,
\email{navara@cmp.felk.cvut.cz}}

\footnotetext[2]{Pavel Pt\'{a}k,
Department of Mathematics,
Faculty of Electrical Engineering,
Czech Technical University in Prague,
Czech Republic,
\email{ptak@math.feld.cvut.cz}}

\noindent {\bf Keywords:} Quantum logic with a symmetric difference, Frink ideal, Boolean algebra.

\noindent {\bf AMS Classification:} 06C15, 03G12, 81P10

\section{Introduction}

In certain quantum axiomatics the events of a quantum experiment are associated with an orthomodular partially ordered set.
This orthomodular partially ordered set is then called a quantum logic (QL).
Traditionally, the QLs were assumed to be projections in a Hilbert space or QLs of a similar type
\cite{Birkhoff-vonNeumann,Gleason,Mackey}, later some more general QLs were investigated
\cite{DvurPulm,Gudder:book,PP,Redei}.
In this note we adopt a kind of ``antiprojection'' approach---we restrict our attention to QLs that are as close as possible to the standard ones.
Let us formally introduce them.

Let us first assume that the QLs to be considered allow for an introduction of a natural symmetric difference.
Let us call them ODPs (see~\cite{MatousekP:Order}).
Going further towards Boolean algebras, let us introduce the following two classes:
Let us denote by $\mathcal{R}$ (resp.~$\mathcal{T}$) the class of all ODPs that are determined as follows:
$P\in\mathcal{R}\iff P$ is na ODP and the implication 
$a\land b=0 \implies a\le b^\perp$ holds true
(resp.~$P\in\mathcal{T}\iff P$ is na ODP and the implication 
$a\land b=a\land b^\perp=0 \implies a\le b^\perp$ holds true).
Consider also the class $\mathcal{S}$ of all ODPs in which all maximal Frink ideals are selective
(a well-known
%! really?
%\marginpar{really\\well-known?}
%``Boolean'' property).
We then show that $\mathcal{R}\subset\mathcal{S}\subset\mathcal{T}$
and that $\mathcal{R}\ne\mathcal{T}$.
We also comment on other properties of the classes.

\section{Notations used throughout the paper}

By an orthocomplemented poset we mean a pentuple $(X,\le,\,^\perp,0,1)$
such that $\le$ is a partial ordering on the set~$X$ with a least (resp.\ greatest) element~$0$ (resp.~$1$)
and with the unary operation $\,^\perp$ on~$X$ such that, for each $x,y\in X$, we have
$x\land x^\perp=0$, $x\lor x^\perp=1$, $(x^\perp)^\perp=x$, and $x\le y$ implies $y^\perp\le x^\perp$.
%\marginpar{not all necessary}

Our principal definition reads as follows.

\begin{definition}
Let $P=(X,\le,\,^\perp,0,1,\bigtriangleup)$, where $(X,\le,\,^\perp,0,1)$ is an orthocomplemented poset and 
$\bigtriangleup\colon X^2\to X$ is a binary operation.
Then $P$ is said to be an \emph{orthocomplemented difference poset} (ODP) if the following three conditions are fulfilled for all $x,y,z\in P$:
\begin{enumerate}
\item[(Def~1)] $x\bigtriangleup(y\bigtriangleup z)=(x\bigtriangleup y)\bigtriangleup z$,
\item[(Def~2)] $x\bigtriangleup 1=1\bigtriangleup x=x^\perp$,
\item[(Def~3)] $x\le z, y\le z \implies x\bigtriangleup y\le z$.
\end{enumerate}
\end{definition}

It can be proved that each ODP is orthomodular. 
%\marginpar{define orthomodularity}
To do that, suppose that $P$ is an ODP and $x,y\in P$ with $x \le y$. 
We have to check the identity $y=x \lor (y \land x^\perp)$. A simple calculation gives us
$y \land (x \lor (y \land x^\perp) )^\perp =  y \land x^\perp \land (y \land x^\perp)^\perp  = 0$. 
But this means in ODPs that $y = x \lor (y \land x^\perp)$  alias this ensures the orthomodularity. 
%\marginpar{Introduce $z$ immediately.}
To check the latter write $z = x \lor (y \land x^\perp)$. We have $z \le y$  and  $y \land z^\perp=0$. 
We see 
%(Def. 3) 
%\marginpar{Def. 3 does not exist.}
that $y \land z^\perp = y \bigtriangleup z^\perp = 0$ . Further, 
$$z = z \bigtriangleup 0 =  z \bigtriangleup (y^\perp \bigtriangleup y^\perp) = (z \bigtriangleup y^\perp) \bigtriangleup y^\perp = 0 \bigtriangleup y^\perp = y\,.$$ 
%\marginpar{$x$ replaced by $z$}%
So $P$ is orthomodular and hence the ODPs are a kind of enriched quantum logics. (We shall study certain algebraic properties of ODPs; the state properties have been investigated in 
\cite{HroPta} and~\cite{MatousekP:Order}).

Each Boolean algebra is an ODP, of course.
A ``proper'' example of an ODP is subsets of even cardinalities.
Formally, let $k\in\N$, $\Omega=\{1,2,\ldots, 2k-1, 2k\}$.
Let $X$ be the collection of all subsets of~$\Omega$ that consist of an even number of elements. 
Then $X$ with the inclusion relation for $\le$ and the complement operation in~$\Omega$ for~$^\perp$ is an ODP.

We shall also deal with a type of ideals in ODPs.
Suppose that $P$ is an ODP and $A\subset P$.
Let us write $A^\uparrow=\{x\in P\mid y\le x$ for any $y\in A\}$
and $A^\downarrow=\{x\in P\mid x\le y$ for any $y\in A\}$.
Let us further write $A^{\uparrow\downarrow}$ for $(A^\uparrow)^\downarrow$.

\begin{definition} (see~\cite{Frink})
Suppose that $P$ is an ODP and $I\subset P$.
We say that $I$ is a \emph{Frink ideal} if
(1)~$1\in I$, and
(2)~for any finite~$J$, $J\subset I$, we have $J^{\uparrow\downarrow}\subset I$.
Further, the ideal $I$ is said to be \emph{selective}
if for any pair $\{a,a^\perp\}$ either $a\in I$ or $a^\perp\in I$.
(Thus, in other words, $\operatorname{card}(\{a,a^\perp\}\cap I=1$.)
\end{definition}

The following result is plausible.

\begin{proposition}
(1)~Each Frink ideal can be extended to a maximal Frink ideal.

(2)~A selective Frink ideal is maximal.
\end{proposition}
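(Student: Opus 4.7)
For part~(1) I plan to use Zorn's lemma on the family $\mathcal{F}_I$ of all proper Frink ideals containing the given Frink ideal~$I$, ordered by inclusion. The only substantive step is verifying that the union $U=\bigcup_\alpha J_\alpha$ of a chain $\{J_\alpha\}$ in $\mathcal{F}_I$ is again a proper Frink ideal containing~$I$: the distinguished element~$1$ lies in each $J_\alpha$ and hence in~$U$; and for any finite $J\subset U$ the linear ordering of the chain together with the finiteness of~$J$ puts $J$ inside one common $J_\alpha$, so $J^{\uparrow\downarrow}\subset J_\alpha\subset U$. Properness is inherited from the chain by the same mechanism driving part~(2): no proper Frink ideal can contain a complementary pair $\{a,a^\perp\}$, because that would force the ideal to equal~$P$. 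Zorn then produces the required maximal Frink ideal above~$I$.

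For part~(2) I argue by contradiction. Suppose the selective Frink ideal~$I$ is properly contained in some proper Frink ideal~$I'$, pick $a\in I'\setminus I$, and note that selectivity forces $a^\perp\in I\subseteq I'$, so $\{a,a^\perp\}\subseteq I'$. The identity $a\lor a^\perp=1$ implies that every upper bound of the pair $\{a,a^\perp\}$ must equal~$1$, and hence
\[
\{a,a^\perp\}^{\uparrow\downarrow}=\{1\}^\downarrow=P.
\]
The Frink closure axiom applied to the finite set $\{a,a^\perp\}\subset I'$ then yields $P\subseteq I'$, contradicting the properness of~$I'$.

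Neither step is technically hard; the crux is simply the observation $\{a,a^\perp\}^{\uparrow\downarrow}=P$, which turns complementary pairs into witnesses of improperness for Frink ideals. The only subtle point, which I would flag explicitly in the write-up, is that ``maximal Frink ideal'' must be interpreted as maximal among \emph{proper} Frink ideals: the full poset~$P$ trivially satisfies the two axioms, so without the implicit properness convention the first assertion would be vacuous and the second could fail. This same observation is what underwrites the properness of the chain union in part~(1).
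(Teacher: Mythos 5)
Your proof is correct and follows essentially the same route as the paper: Zorn's lemma for part~(1), and for part~(2) the key observation that a proper Frink ideal cannot contain a complementary pair because $\{a,a^\perp\}^{\uparrow\downarrow}=\{1\}^{\downarrow}=P$. Your remark about properness is exactly right --- condition~(1) of the printed definition should read $1\notin I$ (as written, $\{1\}^{\uparrow\downarrow}=P$ would force every Frink ideal to equal~$P$), and this is precisely the convention both your argument and the paper's tacitly use.
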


\begin{proof}
Part~(1) is a routine consequence of Zorn's lemma.
Part~(2) can be easily checked:
If $p\in J\setminus I$ for Frink ideals $I,J$,
then neither $p$ nor $p^\perp$ belongs to~$I$.
\end{proof}

At this moment we are ready for our investigation.
Having defined the classes $\mathcal{R}$ and~$\mathcal{T}$ in the Introduction,
we add another class of ODPs to them.
Let us denote by $\mathcal{S}$ the class of all ODPs such that 
$P\in \mathcal{S}$ if each maximal Frink ideal in $P$ is selective.

\section{Results}

We will compare the classes $\mathcal{R}$, $\mathcal{S}$, and $\mathcal{T}$.
The main result reads as follows.
%!
Prior to its formulation, let us note that the proof strategy follows that of~\cite{NP:Frink};
the presence of $\bigtriangleup$ and the pecularities of the classes $\mathcal{R}$, $\mathcal{S}$, and $\mathcal{T}$
require some additional checking in places.

\begin{theorem}
 $\mathcal{R}\subset\mathcal{S}\subset\mathcal{T}$.
\end{theorem}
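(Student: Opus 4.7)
The plan is to prove each inclusion by contradiction: extend suitable Frink ideals to maximal ones using Zorn's lemma (part~(1) of the Proposition above) and then exploit the defining hypotheses of $\mathcal{R}$ and $\mathcal{S}$ to reach contradictions. The strategy follows~\cite{NP:Frink}, with extra care taken to accommodate the symmetric-difference operation~$\bigtriangleup$ and the peculiarities of the three classes.

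For $\mathcal{S}\subset\mathcal{T}$, I would assume $P\in\mathcal{S}$ and that $a\land b=a\land b^\perp=0$ with $a\ne 0$, and derive a contradiction; the vacuous case $a=0$ yields $a\le b^\perp$ immediately. Because $a\ne 0$, one can produce a proper Frink ideal containing~$a^\perp$, which by the Proposition above extends to a maximal Frink ideal~$I$. The hypothesis $P\in\mathcal{S}$ forces $I$ to be selective, so $a\notin I$ while exactly one of $b,b^\perp$ lies in~$I$. In each of the two cases, every common upper bound~$c$ of the two chosen elements of~$I$ satisfies $c^\perp\le a\land b^\perp=0$ or $c^\perp\le a\land b=0$ after orthocomplementation, which forces $c=1$. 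Hence the Frink-ideal closure of the two-element subset of~$I$ equals all of~$P$, contradicting the propriety of~$I$.

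For $\mathcal{R}\subset\mathcal{S}$, I would assume $P\in\mathcal{R}$ and that some maximal Frink ideal $I\subset P$ fails to be selective, so there is an element $a\in P$ with $a\notin I$ and $a^\perp\notin I$. The goal is to build a proper Frink ideal strictly containing~$I$, contradicting maximality. The natural candidate is the Frink ideal generated by $I\cup\{a\}$. The $\mathcal{R}$-hypothesis, which equates disjointness with orthogonality, ensures that the joins required to describe this closure exist as orthogonal sums in the underlying orthomodular poset and permits an explicit one-step, Boolean-flavoured description of the generated ideal. With this description in hand, propriety reduces directly to the condition $a^\perp\notin I$, yielding the contradiction.

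The hardest step is the first inclusion: in a general ODP the closure operator $J\mapsto J^{\uparrow\downarrow}$ is in principle iterative, and identifying the Frink ideal generated by $I\cup\{a\}$ with a one-step description requires precisely the $\mathcal{R}$-hypothesis to collapse the iteration through the availability of orthogonal joins. By contrast, the $\mathcal{S}\subset\mathcal{T}$ direction uses only one application of $J\mapsto J^{\uparrow\downarrow}$ to a carefully chosen pair, and there the two meet-zero hypotheses directly collapse the set of common upper bounds to $\{1\}$.
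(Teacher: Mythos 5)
Your second inclusion, $\mathcal{S}\subset\mathcal{T}$, is correct and is essentially the paper's argument: extend the proper Frink ideal $\{a^\perp\}^\downarrow$ to a maximal, hence selective, Frink ideal $I$, note that exactly one of $b,b^\perp$ lies in $I$, and observe that $\{a^\perp,b\}$ (resp.\ $\{a^\perp,b^\perp\}$) has $1$ as its only upper bound because $a\land b^\perp=0$ (resp.\ $a\land b=0$), so $I$ would have to contain $\{a^\perp,b\}^{\uparrow\downarrow}=P$. The paper phrases this via De~Morgan ($a^\perp\lor b=a^\perp\lor b^\perp=1$), but it is the same computation.

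The first inclusion is where the substance lies, and your sketch does not actually contain it. Two problems. (i)~You locate the role of the $\mathcal{R}$-hypothesis in ``collapsing the iteration'' of $J\mapsto J^{\uparrow\downarrow}$, but no hypothesis is needed for that: since $\uparrow\downarrow$ is a monotone, idempotent closure operator, the Frink ideal generated by $I\cup\{a\}$ is always $\bigcup_F(F\cup\{a\})^{\uparrow\downarrow}$, the union taken over finite $F\subset I$---this one-step description holds in any ODP. (ii)~The real issue is \emph{properness} of this generated ideal, and your claim that ``propriety reduces directly to the condition $a^\perp\notin I$'' is precisely the step that needs proving; nothing in your sketch supplies it, and the mechanism you invoke (existence of orthogonal joins) is not the one that works. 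The missing argument---the heart of the paper's proof, in mirror image---is: for each finite $F\subset I$, since $a^\perp\notin F^{\uparrow\downarrow}\subset I$ there is an upper bound $b_F$ of $F$ with $a^\perp\not\le b_F$; the contrapositive of the $\mathcal{R}$-condition applied to the pair $(a^\perp,b_F^\perp)$ gives $a^\perp\land b_F^\perp\ne0$, hence a nonzero $c_F$ with $c_F\le a^\perp$ and $c_F\le b_F^\perp$; then $c_F^\perp\ne1$ is a common upper bound of $F\cup\{a\}$, so $1\notin(F\cup\{a\})^{\uparrow\downarrow}$ and the generated ideal is proper, contradicting maximality of~$I$. (The paper runs the same argument in the direct form, adjoining $a^\perp$ to $I$ whenever $a\notin I$; the two versions are interchangeable.) As written, your proposal for $\mathcal{R}\subset\mathcal{S}$ is an outline of the right architecture with the key lemma asserted rather than proved.
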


\begin{proof}
 Trying to show $\mathcal{R}\subset\mathcal{S}$, suppose that $P\in\mathcal{R}$ 
 and $I$ is a maximal Frink ideal in~$P$.
 Suppose further that $a\notin I$.
 We want to show that $a^\perp\in I$.
 It is evident that if $F\subset I$, $F$~finite, then $a\notin F^{\uparrow\downarrow}$.
 As a consequence, there is such an upper bound $b_F$ of~$F$ that $a\not\le b_F$.
 Since $(b_F^\perp)^\perp=b_F$, we have $a\not\le(b_F^\perp)^\perp$ and therefore 
 (as $P\in\mathcal{R}$),
 %! omit?
 $a\land b_F^\perp\ne0$.
 This implies that there is an element $c_F$ such that $c_F\ne0$
 %! shorten
 and $c_F\le a, c_F\le b_F^\perp$.
 Hence $c_F^\perp\ne1$ and $c_F^\perp$ is an upper bound of $F\cup a^\perp$.
 But then the set $G=\bigcup_F (F\cup a^\perp)^{\uparrow\downarrow}$, where one takes the union over all finite subsets $F$ of~$I$,
 is a Frink ideal that extends~$I$. Since $I$ is maximal, we infer that $I=G$ and hence $a^\perp\in I$. 
 So $\mathcal{R}\subset\mathcal{S}$.
 
 Let us show that $\mathcal{S}\subset\mathcal{T}$.
 Suppose that $P\in\mathcal{S}$ and suppose further that $a\land b=a\land b^\perp=0$ ($a,b\in P$).
 Assume that $a\ne0$.
 Consider the Frink ideal $J=\{p\in P\mid p\le a^\perp\}
 =\{a^\perp\}^\downarrow$.
 %! shorten?
 Let $I$ be a maximal Frink ideal that contains~$J$.
 Making use of the De~Morgan law, we have $a^\perp\lor b^\perp=a^\perp\lor b=1$.
 This implies that neither $b^\perp$ nor $b$ belongs to~$I$.
 Indeed, since $a^\perp\in I$, we would infer that $1\in I$ and that is excluded.
 So $P\in\mathcal{T}$ and this completes the proof.
\end{proof}

Let us make a few remarks on the classes investigated. 
First, it is obvious that we pursue a ``properly non-lattice'' matter.
Indeed, if $P\in\mathcal{T}$ and $P$ is a lattice,
then $P$ is easily seen to be a Boolean algebra (see also~\cite{Tkadlec:TMMP}).

Second, if $P\in\mathcal{R}$ then $P$ is set-representable as an ODP.
So $P$ can be represented as a collection of subsets of a set with the inclusion for $\le$ and the set-complement for~$^\perp$,
with the presence of the set symmetric difference.
This has been proved in~\cite{MatousekP:Order} by considering the two-valued $\bigtriangleup$-states on~$P$.
In the ``ideal'' reformulation \`a~la Boolean algebras, the ODP set representation of~$P$ follows from the existence of the order-determining collection of maximal Frink ideals in~$P$.
Indeed, observe that a Frink ideal is closed under the operation $\bigtriangleup$ (Def~3).
Then the inequality $a\not\le b$ implies $a\land b^\perp\ne0$ and therefore there is an element
$c$, $c\ne0$, with $c\le a$ and $c\le b^\perp$.
Since $\mathcal{R}\subset\mathcal{S}$, there is a maximal Frink ideal, $I$, that extends the ideal
$J=\{p\in P\mid p\le c\}=\{c\}^\downarrow$
 %! shorten?
and that is selective.
So we infer that $a\in I$ and $b\notin I$.
By the standard Boolean-like reasoning we see that $P$~is set-representable as an ODP
(in fact, if $Q$ is the collection of all selective Frink ideals on~$P$ then a~set representation of~$P$ can be obtained as a collection of subsets of~$Q$).

Third, let us indicate a typical example of an ODP that belongs to~$\mathcal{R}$ and, also, let us see why $\mathcal{R}$ is strictly smaller than~$\mathcal{T}$.
Let $\N$ be the set of all natural numbers.
%! already used
%\marginpar{The notation $\N$ was already used!}
Let us set, for each $i\in\{0,1,2,3,4,5\}$, $N_i=\{n\in\N\mid n=6k+i$ for some natural~$k\}$.
Let us further denote by $A_1$ the set of all even numbers. Now, write
$A_2=N_0\cup N_3$, $A_3=N_2\cup N_3\cup N_4$ ($=A_1\bigtriangleup A_2$).
Consider the collection
$$\widetilde{R}=\{\emptyset, A_1, A_2, A_3,  A_1^\perp, A_2^\perp, A_3^\perp, \N\}$$
and set 
$$R=\{X\subset\N\mid X\bigtriangleup D \text{ is finite for some } D\in\widetilde{R}\}\,.$$
Then one easily shows that $R\in\mathcal{R}$ ($R$~is obviously not Boolean).
Note on dealing with~$R$ that, since any cartesian product of ODPs is an ODP, we can easily construct an ODP 
with an arbitrarily high degree of non-compatibility (a certain virtue of~$\mathcal{R}$ in a possible application of $\mathcal{R}$ in the quantum axiomatics).
Another property of $\mathcal{R}$ worth observing in connection with a potential interpretation
in a quantum theory is the fact that $a,b\in\mathcal{R}$ are compatible exactly when $a \land b$ exists---an ``algebraic'' test for compatibility.

In order to find an ODP that does belong to~$\mathcal{T}$ and does \emph{not} belong to~$\mathcal{R}$,
let us take the above collection~$\widetilde{R}$ and add to it all singletons of the set 
$E=(A_1\cap A_2^\perp) \cup (A_2\cap A_3)$.
%! and of the set~$E^\perp$.
Let us denote by $\widetilde{T}$ the so obtained collection and let us denote by $T$ the ODP generated by $\widetilde{T}$
%! swap $\widetilde{T}$ and $T$?
(it is sufficient to add symmetric differences and complements in an obvious way).
Then $T\in\mathcal{T}$ and $T\notin\mathcal{R}$.
% ==================================

\bigskip
\textbf{Acknowledgements}. The first author was supported by the Czech Science Foundation grant 20-09869L.
The second author was supported by the European Regional Development Fund, project ``Center for Advanced Applied Science'' (No.\ CZ.02.1.01/0.0/0.0/16\_019\allowbreak/0000778).


\begin{thebibliography}{99}

\bibitem{Birkhoff-vonNeumann} Birkhoff, G., von~Neumann, J.:
{\it The logic of quantum mechanics}.
Ann. Math. {\bf 37} (1936), 823--843.

\bibitem {DvurPulm}Dvure\v{c}enskij, A., Pulmannov\'{a}, S.:
\textit{New Trends in Quantum Structures.} Kluwer/\allowbreak Dordrecht \&
Ister/\allowbreak Bratislava, 2000.

\bibitem{Frink}
Frink, O.: Ideals in partially ordered sets. \textit{The American Mathematical Monthly} \textbf{61}(4) (1954), 223--234.

\bibitem{Gleason} Gleason, A.M.:
Measures on the closed subspaces of a Hilbert space.
\textit{J.~Math. Mech.} \textbf{6} (1957), 885--893.

\bibitem{Gudder:book}
Gudder, S.: \textit{Stochastic Methods in Quantum Mechanics}, North-Holland, New York (1979).

\bibitem{HroPta}
Hroch, M., Pt\'{a}k, P.: States on orthocomplemented difference posets (Extensions), Letters in Mathematical Physics 106(8) (2016), 1131--1137.

%\bibitem{Kluk}
%Klukowski, J.: On the representation of Boolean orthomodular partially ordered sets, Demonstratio Math. 8 (1975), 405--423.

\bibitem{Mackey} Mackey, G.W.:
{\it The Mathematical Foundations of Quantum Mechanics}.
Benjamin, New York, 1963.

\bibitem{MatousekP:Order}
Matou\v{s}ek, M., Pt\'{a}k, P.: Orthocomplemented posets with a symmetric difference. \textit{Order} \textbf{26}(1) (2009), 1--21.

\bibitem{NavPta}
Navara, M., Pt\'{a}k, P.: Almost Boolean orthomodular posets. \textit{Journal of Pure and Applied Algebra} \textbf{60} (1989), 105--111.

\bibitem{NP:Frink} Navara, M., Pt\'ak, P.: 
On Frink ideals in orthomodular posets.
\textit{Order}, accepted.
DOI 10.1007/s11083-020-09537-0

%\bibitem{Ptak}
%Pt\'{a}k, P.: Some nearly Boolean orthomodular posets, Proc. Amer. Math. Soc. 126(7) (1998), 2039--2046.

\bibitem{PP} Pt\'{a}k, P., Pulmannov\'{a}, S.: \textit{Orthomodular Structures as Quantum Logics}, Kluwer, Dordrecht (1991).

\bibitem{Redei} R\'edei, M.: Why John von~Neumann did not like the Hilbert space formalism of quantum mechanics (and what he liked instead). \textit{Studies in History and Philosophy of Modern Physics} \textbf{27} (1996), 493--510.

\bibitem{Tkadlec:TMMP}
Tkadlec, J.: Conditions that force an orthomodular poset to be a Boolean algebra, Tatra Mountains Math. Publ. 10 (1997), 55--62.

\end{thebibliography}
\end{document}